\newtheorem{result}{\ }[section]
\theoremstyle{changebreak}                % (see LATEX2E\THEOREM.DTX)
\newtheorem{thm}[result]{Theorem}
\newtheorem{lem}[result]{Lemma}
\newtheorem{cor}[result]{Corollary}
\newtheorem{prop}[result]{Proposition}
\newenvironment{proof}
 {{\sl Proof.}\hspace*{1 ex}}%
 {{\nopagebreak\hspace*{\fill}$\Box$\par\vspace{12pt}}}
\begin{document}

\thispagestyle{empty}
\begin{center}
{\LARGE Polynomial cases of the Discretizable Molecular Distance
  Geometry Problem}
\par\bigskip
{\sc Leo Liberti${}^1$, Carlile Lavor${}^2$, Beno\^{\i}t Masson${}^3$, Antonio Mucherino${}^4$} 
\par\bigskip
\begin{minipage}{15cm}
\begin{flushleft}
{\small
\begin{itemize}
\item[${}^1$] {\it LIX, \'Ecole Polytechnique, 91128 Palaiseau,
  France} \\  Email:\url{liberti@lix.polytechnique.fr}
\item[${}^2$] {\it Dept.~of Applied Maths (IME-UNICAMP), State
  Univ.~of Campinas, 13081-970, Campinas - SP, Brazil} \\
  Email: \url{clavor@ime.unicamp.br}
\item[${}^3$] {\it IRISA, INRIA, Campus de Beaulieu, 35042 Rennes,
    France} \\
  Email: \url{benoit.masson@inria.fr}
\item[${}^4$] {\it CERFACS, Toulouse, France} \\
  Email: \url{antonio.mucherino@cerfacs.fr}
\end{itemize}
}
\end{flushleft}
\end{minipage}
\par \medskip \today
\end{center}
\par \bigskip

\begin{abstract}
  An important application of distance geometry to biochemistry
  studies the embeddings of the vertices of a weighted graph in the
  three-dimensional Euclidean space such that the edge weights are
  equal to the Euclidean distances between corresponding point pairs.
  When the graph represents the backbone of a protein, one can exploit
  the natural vertex order to show that the search space for feasible
  embeddings is discrete. The corresponding decision problem can be
  solved using a binary tree based search procedure which is
  exponential in the worst case. We discuss assumptions
  that bound the search tree width to a polynomial size. \\
\noindent {\bf Keywords}: Branch-and-Prune, symmetry, distance geometry.
\end{abstract}

\section{Introduction}
\label{intro}
We study the following decision problem \cite{dmdgp}:
\begin{quote} 
{\sc Discretizable Molecular Distance Geometry Problem}
(DMDGP). Given a simple undirected weighted graph $G=(V,E,d)$ where
$d:E\to\mathbb{R}_+$, $V$ is ordered so that $V=[n]=\{1,\ldots,n\}$,
and the following assumptions hold:
\begin{enumerate}
\item for all $v>3$ and $u\in V$ with $1\le v-u\le 3$,
 $\{u,v\}\in E$ ({\sc Discretization})
\item for all $v>3$, $E$ contains all edges $\{u,w\}$ with $u\not=w\in
  U_v=\{u\in V\;|\;1\le v-u\le 3\}$, and the distances $d_{uw}$ with
  $u\not=w\in U_v$ obey the strict simplex inequalities
  \cite{blumenthal} ({\sc Strict Simplex Inequalities}),
\end{enumerate}
and given an embedding $x':[3]\to\mathbb{R}^3$, is there an embedding
$x:V\to\mathbb{R}^3$ extending $x'$, such that
\begin{equation}
  \forall \{u,v\}\in E \quad \|x_u-x_v\|=d_{uv} \; ? \label{mdgpeq}
\end{equation}
\end{quote}
Note that the strict simplex inequalities in $\mathbb{R}^3$ reduce to
the strict triangular inequalities
$d_{v-3,v-1}<d_{v-3,v-2}+d_{v-2,v-1}$. An embedding $x$ extends an
embedding $x'$ if $x'$ is a restriction of $x$; an embedding is
feasible if it satisfies \eqref{mdgpeq}. We also consider the
following problem variants:
\begin{itemize}
\item DMDGP${}_K$, i.e.~the family of decision problems (parametrized
  by the positive integer $K$) obtained by replacing each symbol `3'
  in the DMDGP definition by the symbol `$K$';
\item the ${}^{\mbox{\sf\scriptsize K}}$DMDGP, where $K$ is given as
  part of the input (rather than being a fixed constant as in the
  DMDGP${}_K$).
\end{itemize}
We remark that DMDGP=DMDGP${}_3$. Other related problems also exist in
the literature, such as the {\sc Discretizable Distance Geometry
  Problem} (DDGP) \cite{ddgp}, where the {\sc Discretization} axiom is
relaxed to require that each vertex $v>K$ has at least $K$ adjacent
predecessors. The original results in this paper, however, only refer
to the DMDGP and its variants. 

The {\sc Discretization} axiom guarantees that the locus of the points
embedding $v$ in $\mathbb{R}^3$ is the intersection of the three
spheres centered at $v-3,v-2,v-1$ with radii
$d_{v-3,v},d_{v-2,v},d_{v-1,v}$. If this intersection is non-empty,
then it contains two points apart from a set of Lebesgue measure 0
where it may contain either one point or infinitely many. The role of
the {\sc Strict Simplex Inequalities} axiom is to prevent the latter
case of infinitely many points. As such we might actually dispense
with this axiom altogether and simply discuss results that occur with
probability 1. We remark that if the intersection of the three spheres
is empty, then the instance is a NO one. The {\sc Discretization}
axiom allows the solution of DMDGP instances using a recursive
algorithm called Branch-and-Prune (BP) \cite{lln5}: at level $v$, the
search is branched according to the (at most two) possible positions
for $v$. The BP generates a (partial) binary search tree of height
$n$, each full branch of which represents a feasible embedding for the
given graph.

The DMDGP and its variants are related to the {\sc Molecular Distance
  Geometry Problem} (MDGP), which asks to find an embedding in
$\mathbb{R}^3$ of a given weighted undirected graph. We denote the
generalization of the MDGP to embeddings in $\mathbb{R}^K$ where $K$
is part of the input by {\sc Distance Geometry Problem} (DGP), and the
variants with fixed $K$ by DGP${}_K$. The MDGP is a good model for
determining the structure of molecules given a set of inter-atomic
distances \cite{mdgpsurvey,jogomdgp}. Such distances can usually be
found using Nuclear Magnetic Resonance (NMR) experiments
\cite{schlick}, a technique which allows the detection of inter-atomic
distances below 5{\AA}. The DGP has applications in wireless sensor
networks \cite{eren04} and graph drawing. In general, the MDGP and DGP
implicitly require a search in a continuous Euclidean space
\cite{mdgpsurvey}.

The DMDGP is a model for protein backbones. For any atom $v\in V$, the
distances $d_{v-1,v}$ and $d_{v-2,v-1}$ are known because they refer
to covalent bonds.  Furthermore, the angle between $v-2$, $v-1$ and
$v$ is known because it is adjacent to two covalent bonds, which
implies that $d_{v-2,v}$ is also known by triangular geometry. In
general, the distance $d_{v-3,v}$ is smaller than 5{\AA} and can
therefore be assumed to be known by NMR experiments; in practice,
there are ways to find atomic orders which ensure that $d_{v-3,v}$ is
known \cite{iwcp10}. There is currently no known protein with
$d_{v-3,v-1}$ being {\it exactly equal} to $d_{v-3,v-2}+d_{v-2,v-1}$
\cite{lln5}.

The rest of this paper is organized as follows. In Sect.~\ref{s:bp} we
describe the BP algorithm. In Sect.~\ref{s:compl} we discuss
complexity issues. Sect.~\ref{s:bwidth} describes some polynomial
DMDGP subclasses. We make several important contributions: an {\bf
  NP}-hardness proof for the ${}^{\mbox{\sf\scriptsize K}}$DMDGP and
the DMDGP${}_K$ (for $K>2$), a new proof that the number of feasible
embeddings of DMDGP instances is a power of two, and some practically
relevant polynomial cases of the DMDGP.

\section{The BP algorithm}
\label{s:bp}
For all $v\in V$ we let $N(v)=\{u\in V\;|\;\{u,v\}\in E\}$ be the set
of vertices {\it adjacent} to $v$. An embedding of a subgraph of $G$
is called a {\it partial embedding} of $G$. We denote by $X$ the set
of embeddings (modulo congruences) solving a DMDGP${}_K$ (or
${}^{\mbox{\sf\scriptsize K}}$DMDGP) instance.

The BP algorithm exploits the edges guaranteed by the {\sc
  Discretization} axiom in order to search a discrete set: vertex $v$
can be placed in at most two possible positions (the intersection of
$K$ spheres in $\mathbb{R}^K$). Each is tested in turn and the
procedure called recursively for each feasible positions. The BP
exploits all other edges in the graph in order to prune some branches:
a position might be feasible with respect to the distances to the $K$
immediate predecessors $v-1,\ldots,v-K$, but not necessarily with
distances to other adjacent predecessors.

For a partial embedding $\bar{x}$ of $G$ and $\{u,v\}\in E$ let
$S^{\bar{x}}_{uv}$ be the sphere centered at $x_u$ with radius
$d_{uv}$.
\begin{algorithm}[!htp]
\begin{algorithmic}[1]
\REQUIRE A vtx.~$v\in V\smallsetminus [K]$, a partial
embedding $\bar{x}=(x_1,\ldots,x_{v-1})$, a set $X$.  
\STATE $P=\bigcap\limits_{u\in N(v)\atop u<v} S_{uv}^{\bar{x}}$; 
\STATE $\forall p\in P\; (\;(x\leftarrow(\bar{x},p))$; {\bf if}
  ($v=n$) $X\leftarrow X\cup\{x\}$ {\bf else} {\sc BP}($v+1$,
  $x$, $X$)$\;)$.
\end{algorithmic}
\caption{{\sc BP}($v$, $\bar{x}$, $X$)} 
\label{alg:bp}
\end{algorithm}
The BP algorithm, used for solving the DMDGP and its variants, is {\sc
  BP}($K+1$, $x'$, $\emptyset$) (see Alg.~\ref{alg:bp}), where $x'$ is
the initial embedding of the first $K$ vertices mentioned in the DMDGP
definition. By the DMDGP axioms, $|P|\le 2$. At termination, $X$
contains all embeddings (modulo congruences) extending $x'$
\cite{lln5,dmdgp}. Embeddings $x\in X$ can be represented by sequences
$\chi(x)\in\{-1,1\}^n$ with: (i) $\chi(x)_i=1$ for all $i\le K$; (ii)
for all $i>K$, $\chi(x)_i=-1$ if $ax_i < a_0$ and $\chi(x)_i=1$ if
$ax_i\ge a_0$, where $ax=a_0$ is the equation of the hyperplane
through $x_{i-K},\ldots,x_{i-1}$. For an embedding $x\in X$, $\chi(x)$
is the {\it chirality} of $x$ \cite{CH88} (the formal definition of
chirality actually states $\chi(x)_0=0$ if $ax_i=a_0$, but since this
event has probability 0, we do not consider it here).

The BP (Alg.~\ref{alg:bp}) can be run to termination to find all
possible embeddings of $G$, or stopped after the first leaf node at
level $n$ is reached, in order to find just one embedding of $G$. In
the last few years we have conceived and described several BP variants
targeting different problems \cite{dmdgpejor}, including, very
recently, problems with interval-type uncertainties on some of the
distance values \cite{iwcp10}. Compared to continuous search
algorithms (e.g.~\cite{morewu}), the performance of the BP algorithm
is impressive from the point of view of both efficiency and
reliability. The BP algorithm, moreover, is currently the only method
able to find all embeddings for a given protein backbone.

\section{Complexity}
\label{s:compl}
Any class of YES instances where each vertex $v$ only has distances to
the $K$ immediate predecessors provides a full BP binary search tree
(after level $K$), and therefore shows that the BP is an
exponential-time algorithm in the worst case. One remarkable feature
of the computational experiments conducted on our BP implementation
\cite{mdjeep} on protein instances is that the exponential-time
behaviour of the BP algorithm was never noticed empirically. When we
were able to embed protein backbones of ten thousand atoms in just
over 13 seconds of CPU time (on a single core) \cite{aiccsa10}, we
started to suspect that protein instances might have some special
properties ensuring that the BP ran in polynomial time. Specifically,
using the particular structure of the protein graph, we argue in
Sect.~\ref{s:bwidth} that it is reasonable to expect that the BP will
yield a search tree of bounded width.

Restricting $d$ to only take integer values, the DGP${}_1$ is {\bf
  NP}-complete by reduction from {\sc Subset-Sum}, the DGP${}_K$ is
(strongly) {\bf NP}-hard by reduction from 3-SAT, and the DGP is
(strongly) {\bf NP}-hard by induction on $K$ \cite{saxe79}. Only the
DGP${}_1$ is {\bf NP}-complete because if $d$ is integer then the
YES-certificate $x$ (the embedding) can be chosen to have integer
values. It is currently not known whether there is a polynomial length
encoding of the algebraic numbers that can be used to show that DGP is
in {\bf NP}. 

The DMDGP is {\bf NP}-hard by reduction from {\sc Subset-Sum} (Thm.~3
in \cite{dmdgp}). We generalize that proof to the DMDGP${}_K$.
Intuitively, we exploit the fact that a subset sum instance
$a_1,\ldots,a_N$ with solution $s_1,\ldots,s_N\in\{-1,1\}$ has
$\sum_{\ell\le N} s_\ell a_\ell = 0$ (the zero-sum property) to
construct a DMDGP instance with $KN+1$ points, where the zero-th point
is at the origin and the $\ell$-th set of $K$ successive points is
associated to $a_\ell$; the $j$-th point in the $\ell$-th set adds
$s_\ell a_\ell$ to its $j$-th coordinate, so that the last point is
again the origin (all coordinates satisfy the {\sc Subset-Sum}'s
zero-sum property).
\begin{thm}
The DMDGP${}_K$ is {\bf NP}-hard for all $K\ge 2$. \label{thmK}
\end{thm}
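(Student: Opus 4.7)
The plan is to reduce from Subset-Sum, elaborating the idea sketched just above the theorem. Given positive reals $a_1,\ldots,a_N$ (WLOG after taking absolute values), I construct a DMDGP${}_K$ instance on $KN+1$ vertices: an initial vertex $v^*$ followed by $N$ successive ``blocks'' of $K$ vertices. Assign to each non-initial vertex $u$ a block index $\ell(u)$ and an in-block position $j(u)\in[K]$. For a sign vector $s\in\{\pm 1\}^N$, define the canonical embedding by $x_{v^*}(s)=0$ and $x_u(s)=x_{u-1}(s)+s_{\ell(u)}a_{\ell(u)}e_{j(u)}$, so that each block $\ell$ adds $s_\ell a_\ell$ to every coordinate in turn. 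The edge distance $d_{uw}$ for $|u-w|\le K$ is set to $\|x_u(\mathbf{1})-x_w(\mathbf{1})\|$; this is $s$-independent because each step has length $a_{\ell(u)}$ and any $K$ consecutive steps use the $K$ coordinate axes each once. Finally, adjoin the long-range edge joining $v^*$ to the terminal vertex with distance $0$, to encode the zero-sum condition.

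Verifying the DMDGP${}_K$ axioms is immediate: Discretization holds by construction, and for Strict Simplex Inequalities any $K+1$ consecutive vertices are joined by $K$ pairwise-orthogonal coordinate-aligned displacements of nonzero magnitudes, hence are affinely independent in $\mathbb{R}^K$ and satisfy a strictly positive Cayley--Menger determinant. The whole construction is polynomial in $N$ and $K$.

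For the $(\Leftarrow)$ direction, any Subset-Sum solution $s$ yields $x(s)$ as a feasible embedding whose terminal vertex equals $(\sum_\ell s_\ell a_\ell)(e_1+\cdots+e_K)=0$. The heart of the argument is the $(\Rightarrow)$ direction, which I prove by induction on $u$: in any feasible embedding, the $K$ predecessors $x_{u-K},\ldots,x_{u-1}$ affinely span the hyperplane through $x_{u-1}$ orthogonal to $e_{j(u)}$. Indeed, inductively the $K-1$ interstitial displacements are along $e_{j(w)}$ for $w=u-K+1,\ldots,u-1$, and these $K-1$ coordinate indices cover exactly $\{1,\ldots,K\}\setminus\{j(u)\}$ by the cyclic block pattern. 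The two BP candidates for $x_u$ are therefore reflections through this hyperplane, namely $x_{u-1}\pm a_{\ell(u)}e_{j(u)}$, so every feasible embedding has each step of the form $\alpha_u a_{\ell(u)}e_{j(u)}$ with $\alpha_u\in\{\pm 1\}$. Summing, the terminal vertex's $j$-th coordinate equals the sum of $\alpha_u a_{\ell(u)}$ over vertices $u$ with $j(u)=j$; the appended distance-$0$ constraint forces each of these $K$ sums to vanish, and the subfamily corresponding to the first in-block position ($j=1$) then yields a Subset-Sum solution.

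The main obstacle will be maintaining the inductive invariant that the $K$-predecessor affine span is the coordinate hyperplane orthogonal to $e_{j(u)}$: this requires tracking how the cyclic coordinate pattern wraps from $e_K$ back to $e_1$ across block boundaries, and requires $K\ge 2$ both for there to be genuine reflection freedom at each step and for the Simplex Inequalities to cut out a meaningful constraint. Once this invariant is established, the coordinate-wise decomposition of the distance-$0$ final constraint, and with it the equivalence between feasibility of the DMDGP${}_K$ instance and solvability of Subset-Sum, is immediate.
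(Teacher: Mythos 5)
Your proposal is correct and follows essentially the same route as the paper: the identical Subset-Sum reduction on $KN+1$ vertices with orthogonal coordinate-aligned steps of length $a_{\ell}$, short-range distances fixed by the Pythagorean formula, and a distance-$0$ edge from the initial to the terminal vertex forcing each of the $K$ coordinate sums to vanish. Your inductive invariant on the affine span of the $K$ predecessors is just a more explicit rendering of the paper's appeal to orthogonality and chirality in the converse direction (and you are slightly more careful than the paper in checking the Strict Simplex Inequalities axiom), but the argument is the same.
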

\begin{proof}
  Let $a=(a_1,\ldots,a_N)$ be an instance of {\sc Subset-Sum} consisting
  of positive integers, and define an instance of DMDGP${}_K$ where
  $V=\{0,\ldots,KN\}$, $E$ includes $\{i,i+j\}$ for all
  $j\in\{1,\ldots,K\}$ and $i\in\{0,\ldots,KN-j\}$, and:
\begin{eqnarray}
  \forall i\in\{0,\ldots,KN-1\} \quad d_{i,i+1} &=& a_{\lfloor
     i/K\rfloor}  \label{eqj1} \\
  \forall j\in\{2,\ldots,K\}, i\in\{0,\ldots,KN-j\} \quad d_{i,i+j} &=&
     \sqrt{\sum_{\ell=1}^{j} d_{i+\ell-1,i+\ell}^2} \label{eqjell} \\
d_{0,KN} &=& 0. \label{eqjlast}
\end{eqnarray}
Let $s\in\{-1,1\}^N$ be a solution of the {\sc Subset-Sum} instance
$a$. We let $x_0=0$ and for all $i=K(\ell-1)+j>0$ we let
$x_i=x_{i-1}+s_\ell a_\ell e_j$, where $e_j$ is the vector with a one
in component $j$ and zero elsewhere. Because $\sum_{\ell\le N} s_\ell
a_\ell=0$, if $s$ solves the {\sc Subset-Sum} instance $a$ then, by
inspection, $x$ solves the corresponding DMDGP instance
\eqref{eqj1}-\eqref{eqjlast}.  Conversely, let $x$ be an embedding
that solves \eqref{eqj1}-\eqref{eqjlast}, where we assume without loss
of generality that $x_0=0$. Then \eqref{eqjell} ensures that the line
through $x_i,x_{i-1}$ is orthogonal to the line through
$x_{i-1},x_{i-2}$ for all $i>1$, and again we assume without loss of
generality that, for all $j\in\{1,\ldots,K\}$, the lines through
$x_{j-1},x_j$ are parallel to the $i$-th coordinate axis. Now consider
the chirality $\chi$ of $x$: because all distance segments are
orthogonal, for each $j\le K$ the $j$-th coordinate is given by
$x_{KN,j}=\sum\limits_{i\bmod K=j} \chi_i a_{\lfloor i/K\rfloor}$.
Since $d_{0,KN}=0$, for all $j\le K$ we have $0=x_{KN,j}=\sum_{\ell\le
  N} \chi_{K(\ell-1)+j} a_\ell$, which implies that, for all $j\le K$,
$s^j=(\chi_{K(\ell-1)+j}\;|\;1\le\ell\le N)$ is a solution for the
{\sc Subset-Sum} instance $a$. 
\end{proof}

\begin{cor}
  The ${}^{\mbox{\sf\scriptsize K}}$DMDGP is {\bf NP}-hard.
\end{cor}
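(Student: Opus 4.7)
The plan is to derive this immediately from Theorem \ref{thmK} by a trivial reduction that exhibits any fixed $K$ version as a special case of the input-parametrized version. First I would fix some $K \ge 2$ (for instance $K = 2$, or $K = 3$ to match the original DMDGP) for which Theorem \ref{thmK} already establishes that DMDGP${}_K$ is {\bf NP}-hard.

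Next I would describe the reduction. Given an instance $(G, d, x')$ of DMDGP${}_K$ with this fixed $K$, I would construct the corresponding ${}^{\mbox{\sf\scriptsize K}}$DMDGP instance $(K, G, d, x')$ by simply prepending the integer $K$ to the input. This transformation is clearly computable in polynomial (indeed, logarithmic additive) time, and the two instances have identical YES/NO status because the DMDGP${}_K$ axioms with $K$ fixed coincide with the ${}^{\mbox{\sf\scriptsize K}}$DMDGP axioms evaluated at that particular $K$. Hence any polynomial-time algorithm for the ${}^{\mbox{\sf\scriptsize K}}$DMDGP would, composed with this embedding of the input, yield a polynomial-time algorithm for DMDGP${}_K$, contradicting Theorem \ref{thmK}.

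There is essentially no obstacle here: the only thing to watch is to make sure that fixing $K$ as a constant in the input of a problem that takes $K$ as part of the input is a legitimate polynomial reduction, which it plainly is since the encoding of a constant contributes $O(1)$ bits. I would write the proof in two or three lines, formally exhibiting the reduction and invoking Theorem \ref{thmK}.
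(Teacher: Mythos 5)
Your proposal is correct and matches the paper's argument: the paper likewise observes that each DMDGP${}_K$ instance is a ${}^{\mbox{\sf\scriptsize K}}$DMDGP instance with that particular $K$ supplied as input, so {\bf NP}-hardness follows by inclusion from Theorem~\ref{thmK}. Your version merely spells out the trivial reduction more explicitly than the paper does.
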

\begin{proof}
  Every specific instance of the ${}^{\mbox{\sf\scriptsize K}}$DMDGP
  specifies a fixed value for $K$ and hence belongs to the
  DMDGP${}_K$. Hence the result follows by inclusion. 
\end{proof}

\section{BP search trees with bounded width}
\label{s:bwidth}

We partition $E$ into the sets $E_D = \{\{u,v\} \;|\; |v-u|\le K\}$
and $E_P = E\smallsetminus E_D$. We call $E_D$ the {\it discretization
  edges} and $E_P$ the {\it pruning edges}. Discretization edges
guarantee that a DGP instance is in the ${}^{\mbox{\sf\scriptsize
    K}}$DMDGP. Pruning edges are used to reduce the BP search space by
pruning its tree.  In practice, pruning edges might make the set $P$
in Alg.~\ref{alg:bp} have cardinality 0 or 1 instead of 2. We assume
$G$ is a YES instance of the ${}^{\mbox{\sf\scriptsize K}}$DMDGP.

\subsection{The discretization group}
Let $G_D=(V,E_D,d)$ and $X_D$ be the set of embeddings of $G_D$; since
$G_D$ has no pruning edges, the BP search tree for $G_D$ is a full
binary tree and $|X_D|=2^{n-K}$.  The discretization edges arrange the
embeddings so that, at level $\ell$, there are $2^{\ell-K}$ possible
embeddings $x_v$ for the vertex $v$ with rank $\ell$.  We assume that
$|P|=2$ at each level $v$ of the BP tree, an event which, in absence
of pruning edges, happens with probability 1 --- thus many results in
this section are stated with probability 1. Let $x_v,x'_v$ the
possible embeddings of $v$ at level $v$ of the tree; then by
elementary spherical geometry considerations, $x'_v$ is the reflection
of $x_v$ through the hyperplane defined by $x_{v-K},\ldots,x_{v-1}$.
Denote this reflection by $R_x^v$.
\begin{thm}[Cor.~4.5 and Thm.~4.8 in \cite{powerof2-tr}]
\label{pow2lemma}
With probability 1, for all $v>K$ and $u<v-K$ there is a set $H^{uv}$,
with $|H^{uv}|=2^{v-u-K}$, of real positive values such that for each
$x\in X$ we have $\|x_v-x_u\|\in H^{uv}$. Furthermore, $\forall x\in
X\;\|x_v-x_u\|=\|R_x^{u+K}(x_v)-x_u\|$ and $\forall x'\in X$, if
$x_v'\not\in\{x_v,R_x^{u+K}(x_v)\}$ then
$\|x_v-x_u\|\not=\|x'_v-x_u\|$.
\end{thm}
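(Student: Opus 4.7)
The plan exploits the \emph{partial reflection} structure of the chirality parametrization. Every $x\in X$ is identified with a sign sequence $\chi(x)\in\{-1,+1\}^n$, and flipping a single bit $\chi_j$ (for $j>K$) applies the reflection $\rho_j$ through the hyperplane $H_j=\mathrm{aff}(x_{j-K},\ldots,x_{j-1})$ to every position $x_j,\ldots,x_n$. The first step is to classify the flips by their effect on $\|x_v-x_u\|$ into four regimes. For $j\le u$, both $x_u$ and $x_v$ are moved by the same isometry $\rho_j$, so the distance is invariant. For $u<j\le u+K$, the index $u$ lies in $\{j-K,\ldots,j-1\}$, so $x_u\in H_j$; therefore $\rho_j$ fixes $x_u$ while reflecting $x_v$ across a hyperplane through $x_u$, and the distance is again preserved. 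For $u+K<j\le v$, we generically have $x_u\notin H_j$, and the reflection changes the distance. For $j>v$, neither point is touched. This immediately gives the upper bound $|H^{uv}|\le 2^{v-u-K}$, since only the bits $\chi_{u+K+1},\ldots,\chi_v$ influence the distance.

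For the matching lower bound I would induct on $v-u-K$. The base case $v=u+K+1$ reduces to the observation that the two admissible positions for $x_v$ are reflections of each other across $H_v=\mathrm{aff}(x_{u+1},\ldots,x_{u+K})$ and lie at different distances from $x_u$ unless $x_u$ sits on the line perpendicular to $H_v$ through the midpoint of the two candidates --- a measure-zero event. In the inductive step, introducing the extra bit $\chi_v$ doubles the number of distinct distances, since the new reflection $\rho_v$ generically separates each previously-common value into two. The identity $\|x_v-x_u\|=\|R_x^{u+K}(x_v)-x_u\|$ then follows immediately from the second regime at $j=u+K$, because $R_x^{u+K}$ is precisely reflection through the hyperplane $H_{u+K}\ni x_u$.

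The hardest part of the argument is the uniqueness statement: whenever $x'\in X$ places vertex $v$ at a position $x'_v\notin\{x_v,R_x^{u+K}(x_v)\}$, the distance to $x_u$ must strictly differ. The main obstacle is that several \emph{a priori} distinct flips (at positions $u+1,\ldots,u+K$) all preserve the distance to $x_u$, so one must verify that their combined action on the \emph{position} of $x_v$ relative to $x_u$ collapses onto the single involution $R_x^{u+K}$ modulo the pre-$u$ symmetries. I would handle this by quotienting $X$ by the action of the pre-$u$ flips (which act trivially on $\|x_v-x_u\|$) and then analysing the subgroup generated by $\rho_{u+1},\ldots,\rho_{u+K}$ acting on the pair $(x_u,x_v)$. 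The detailed group-theoretic bookkeeping, showing that the residual distance-preserving action is generated by the single reflection $R_x^{u+K}$, is the content of the technical report \cite{powerof2-tr}, to which I would defer for the full verification.
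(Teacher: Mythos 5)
Your four-regime classification of the flips is a correct formalization of what the paper only gestures at with a figure, and your overall skeleton (induction on $v-u-K$, i.e.\ on the ``level distance'') is exactly the route the paper's one-line proof indicates before deferring to \cite{powerof2-tr}; the invariance $\|x_v-x_u\|=\|R_x^{u+K}(x_v)-x_u\|$ does follow immediately from $x_u$ lying on the hyperplane through $x_u,\ldots,x_{u+K-1}$, as you say. However, two steps do not close. First, in the inductive step for $|H^{uv}|=2^{v-u-K}$ you argue only that the reflection at level $v$ ``generically separates each previously-common value into two'': this rules out a collision between the two children of one level-$(v-1)$ distance class, but not a collision between children belonging to \emph{different} classes. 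The entire content of the probability-1 statement is that the $2^{v-u-K}$ candidate distances, viewed as functions of the instance data, are pairwise non-identical, so that all cross-branch coincidences are confined to a measure-zero set; that argument is the substance of the cited report and is absent from your sketch.

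Second, and more seriously, the uniqueness assertion cannot be rescued by the mechanism you propose. You correctly note that \emph{every} flip $g_j$ with $u<j\le u+K$ (and $j>K$) has its mirror hyperplane passing through $x_u$, hence fixes $x_u$ and preserves $\|x_v-x_u\|$. For $u\ge 2$ there are $\min(u,K)\ge 2$ such flips; none of them is a ``pre-$u$ symmetry'' (they all fix $x_1,\ldots,x_u$ pointwise), and generically $R_x^{j}(x_v)\notin\{x_v,R_x^{u+K}(x_v)\}$ for $u<j<u+K$, since the hyperplanes through $x_{j-K},\ldots,x_{j-1}$ and through $x_u,\ldots,x_{u+K-1}$ are distinct. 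So the orbit of $x_v$ under the distance-preserving flips has $2^{\min(u,K)}$ points, all equidistant from $x_u$, and the action does \emph{not} collapse onto the single involution $R_x^{u+K}$. Quotienting by the pre-$u$ flips cannot repair this, because those flips move $x_u$ itself and therefore generically change $\|x'_v-x_u\|$ rather than acting trivially on it. The third assertion holds as written only for $u=1$, or under the additional restriction that $x'$ agrees with $x$ up to level $u+K-1$ (which is the setting in which it is actually invoked for pruning); your deferral to \cite{powerof2-tr} therefore conceals a point where the argument, as you have set it up, genuinely fails rather than merely omits detail.
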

\begin{proof}
  Sketched in Fig.~\ref{f:prunedist} for $K=2$; the circles mark
  equidistant levels from 1. Intuitively, two branches from level 1 to
  level 4 or 5 will have equal segments but different angles, which
  will cause the end dots to be at different distances from level 1.
  The formal proof is by induction on the level distance. 
\end{proof}
\begin{figure}[!ht]
\begin{center}
\psfrag{nu1}{$\nu_1$}
\psfrag{nu2}{$\nu_2$}
\psfrag{1}{$1$}
\psfrag{2}{$2$}
\psfrag{r}{$5$}
\psfrag{g}{$3$}
\psfrag{b}{$4$}
\psfrag{3}{$\nu_3$}
\psfrag{4}{$\nu_4$}
\psfrag{5}{$\nu_5$}
\psfrag{6}{$\nu_6$}
\psfrag{7}{$\nu_7$}
\psfrag{8}{$\nu_8$}
\psfrag{9}{$\nu_9$}
\psfrag{10}{$\nu_{10}$}
\psfrag{11}{$\nu_{11}$}
\psfrag{12}{$\nu_{12}$}
\psfrag{13}{$\nu_{13}$}
\psfrag{14}{$\nu_{14}$}
\psfrag{15}{$\nu_{15}$}
\psfrag{16}{$\nu_{16}$}
\includegraphics[width=12cm]{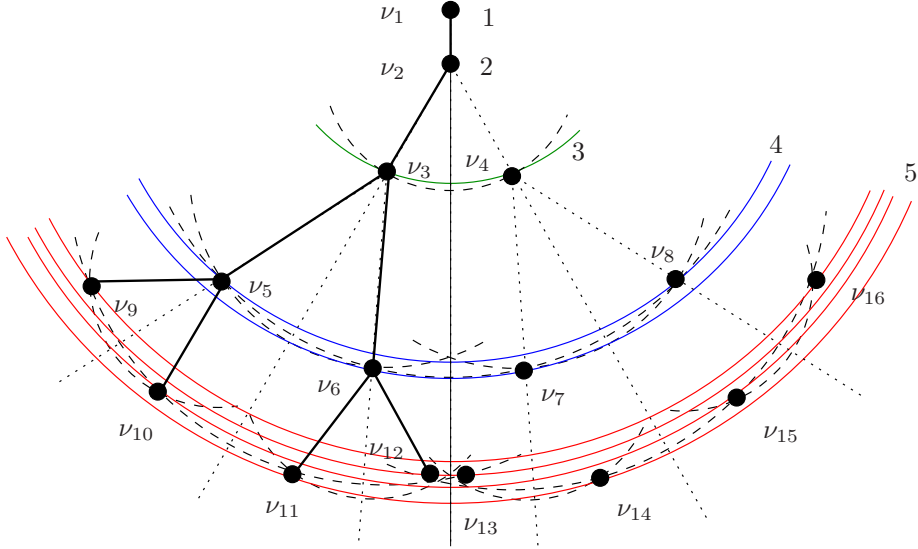}
\end{center}
\caption{A pruning edge $\{1,4\}$ prunes either $\nu_6,\nu_7$ or
  $\nu_5,\nu_8$.}
\label{f:prunedist}
\end{figure}

We now define partial reflection operators:
\begin{equation}
  g_v(x)=(x_1,\ldots,x_{v-1},R_x^v(x_v),\ldots,R_x^v(x_n)). \label{prefeq}
\end{equation}
The $g_v$'s map an embedding $x$ to its partial reflection with first
branch at $v$. It is evident that the $g_v$'s are injective with
probability 1 and idempotent.
\begin{lem}
  For $u,v\in V$ such that $u,v>K$, $g_ug_v(x)=g_vg_u(x)$.
  \label{lemmcomm}
\end{lem}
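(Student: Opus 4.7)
The plan is to reduce the commutativity claim to a single invariant: for every $w>K$, the operator $g_w$ flips the $w$-th component of $\chi(x)$ and leaves every other component unchanged. Once this invariant is in hand, both $g_ug_v(x)$ and $g_vg_u(x)$ will share the initial $K$ points of $x$ (fixed by both operations since $u,v>K$) and agree on chirality (namely $\chi(x)$ with the entries at $u$ and $v$ toggled), so the bijective correspondence between $X$ and chirality strings (given a fixed initial embedding) forces them to coincide.

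The main step is the chirality invariant, and its proof splits naturally according to how the index $i$ compares with $w$. For $i<w$ the embedding is untouched by \eqref{prefeq}, so there is nothing to show. At $i=w$, by definition $(g_w(x))_w = R_x^w(x_w)$ lies on the opposite side of the hyperplane $H_w$ through $x_{w-K},\ldots,x_{w-1}$, so the $w$-th bit flips. For $i>w$ one has to identify the hyperplane $H_i'$ that governs the chirality at level $i$ in $g_w(x)$, i.e.\ the hyperplane through $(g_w(x))_{i-K},\ldots,(g_w(x))_{i-1}$. When $i\ge w+K$ all $K$ defining points have been $R_x^w$-reflected, so $H_i' = R_x^w(H_i)$; since $R_x^w$ is an isometry sending $x_i$ to $(g_w(x))_i$ and $H_i$ to $H_i'$, the signed side of the hyperplane is preserved and the $i$-th chirality bit is unchanged.

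The delicate sub-case, which I expect to be the main obstacle, is $w<i<w+K$: the first $K-(i-w)$ defining points $x_{i-K},\ldots,x_{w-1}$ are left untouched while the remaining $i-w$ are reflected, so it is not immediate that $H_i' = R_x^w(H_i)$ still holds. The key observation is that $\{i-K,\ldots,w-1\}\subset\{w-K,\ldots,w-1\}$, so those untouched points already lie on $H_w$ and are therefore $R_x^w$-fixed, which means they also lie on $R_x^w(H_i)$; combined with $R_x^w(x_w),\ldots,R_x^w(x_{i-1})$, this exhibits $K$ points (in general position with probability $1$) common to $H_i'$ and $R_x^w(H_i)$, forcing the two hyperplanes to coincide. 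The same isometry argument as before then gives chirality preservation, which completes the invariant. To close, assume without loss of generality that $u<v$; applying the invariant twice shows that composing $g_u$ and $g_v$ in either order toggles exactly the bits at $u$ and $v$ and leaves the initial embedding alone, so $\chi(g_u g_v(x)) = \chi(g_v g_u(x))$ and consequently $g_u g_v(x) = g_v g_u(x)$.
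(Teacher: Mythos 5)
Your overall strategy---identify how $g_w$ acts on chirality vectors, observe that the resulting action is a componentwise sign change (hence commutative), and then use the fact that an embedding extending the fixed initial embedding is determined with probability~1 by its chirality string---is sound and genuinely different from the paper's proof, which instead expands both compositions coordinate-by-coordinate and invokes a commutation identity $R_{g_v(x)}^uR_x^v(x_w)=R_{g_u(x)}^vR_x^u(x_w)$ imported from \cite{powerof2-tr}. However, your key invariant is false as stated. You claim $g_w$ flips only the $w$-th chirality bit; in fact it flips \emph{every} bit from $w$ to $n$, which is exactly what the paper's own subsequent lemma asserts ($\chi(g_v(x))=\chi(x)\odot\gamma_v$ with $\gamma_v=(1,\ldots,1,-1_v,\ldots,-1)$). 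The error is in the step ``$R_x^w$ is an isometry sending $x_i$ to $(g_w(x))_i$ and $H_i$ to $H_i'$, so the signed side is preserved.'' An isometry does carry the configuration (hyperplane, point) to the image configuration, but the chirality bit is computed from the \emph{canonically oriented} equation determined by the ordered tuple $(g_w(x))_{i-K},\ldots,(g_w(x))_{i-1}$, and a reflection is orientation-reversing: for $i\ge w+K$ (and likewise in your ``delicate'' sub-case $w<i<w+K$, where your observation that the untouched points lie on the mirror correctly shows the whole $(K{+}1)$-tuple is the $R_x^w$-image of the original) the defining determinant picks up the factor $\det R_x^w=-1$, so the bit flips. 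A quick check with $K=2$: reflecting $x_3,x_4$ across the line through $x_1,x_2$ changes the sign of $\det(x_3-x_2,\,x_4-x_2)$.

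Fortunately the flaw is local and repairable: with the corrected invariant, $\chi(g_ug_v(x))=\chi(x)\odot\gamma_u\odot\gamma_v=\chi(g_vg_u(x))$ since componentwise multiplication commutes, and your final step (same prefix, same chirality, hence same embedding with probability~1) goes through unchanged. So the architecture of your argument survives, but as written the central lemma contradicts the paper's own $\gamma_v$ lemma and the justification offered for it does not hold.
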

\begin{proof}
Assume without loss of generality $u<v$. Then:
\begin{eqnarray*}
  g_ug_v(x) &=& g_u(x_1,\ldots,x_{v-1},R_x^v(x_v),\ldots,R_x^v(x_n)) \\
            &=& (x_1\ldots,x_{u-1}, R_{g_v(x)}^u(x_u), \ldots,
                           R_{g_v(x)}^u R_x^v(x_v), \ldots, 
                           R_{g_v(x)}^u R_x^v(x_n)) \\
            &=& (x_1\ldots,x_{u-1}, R_{x}^u(x_u), \ldots,
                           R_{g_u(x)}^v R_x^u(x_v), \ldots, 
                           R_{g_u(x)}^v R_x^u(x_n)) \\
            &=& g_v(x_1,\ldots,x_{u-1},R_x^u(x_u),\ldots,R_x^u(x_n))
            \\
            &=& g_vg_u(x),
\end{eqnarray*}
where $R_{g_v(x)}^uR_x^v(x_w)=R_{g_u(x)}^vR_x^u(x_w)$ for each $w\ge
v$ by Lemma 4.2 in \cite{powerof2-tr}. 
\end{proof}
We define the {\it discretization group} to be the group
$\mathcal{G}_D=\langle g_v\;|\;v>K\rangle$ generated by the $g_v$'s
\begin{cor}
  With probability 1, $\mathcal{G}_D$ is an Abelian group isomorphic
  to $C_2^{n-K}$.
\end{cor}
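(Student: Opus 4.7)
The plan is to exhibit a surjective homomorphism $\varphi\colon C_2^{n-K}\to \mathcal{G}_D$ from the free abelian group generated by $n-K$ involutions onto $\mathcal{G}_D$, and then show that $\varphi$ is injective. Surjectivity is automatic once we know that $\mathcal{G}_D$ is abelian and generated by involutions.

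First I would observe that each $g_v$ is an involution: by construction $R_x^v$ is a reflection through a hyperplane depending only on $x_{v-K},\ldots,x_{v-1}$, and these coordinates are preserved by $g_v$; hence applying $g_v$ twice reflects $x_v,\ldots,x_n$ through the same hyperplane, giving back $x$. Combined with Lemma~\ref{lemmcomm}, this shows that $\mathcal{G}_D$ is abelian and every element can be written as $g_S=\prod_{v\in S}g_v$ for some $S\subseteq\{K+1,\ldots,n\}$. This gives a well-defined surjective homomorphism $\varphi\colon C_2^{n-K}\to\mathcal{G}_D$ sending the indicator of $S$ to $g_S$.

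The core step is injectivity, i.e.~showing $g_S\neq\mathrm{id}$ for every nonempty $S$. I would argue as follows: let $v^\star=\min S$, and evaluate $g_S$ on any $x\in X_D$. Since the $g_v$'s commute I may apply $g_{v^\star}$ last; the generators $g_v$ with $v>v^\star$ keep the $v^\star$-th coordinate fixed (their partial reflection starts at $v>v^\star$), so $(g_S x)_{v^\star}=R_x^{v^\star}(x_{v^\star})$. With probability $1$ the point $x_{v^\star}$ does not lie on the hyperplane through $x_{v^\star-K},\ldots,x_{v^\star-1}$ (this is the same measure-zero event excluded throughout the section), so $R_x^{v^\star}(x_{v^\star})\neq x_{v^\star}$ and therefore $g_S(x)\neq x$. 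Thus $\ker\varphi$ is trivial and $\varphi$ is an isomorphism.

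The main obstacle I anticipate is making the injectivity argument fully rigorous: one must check that reordering the product using Lemma~\ref{lemmcomm} is legitimate (it is, because the commutation relation is an identity of functions on $X_D$, not merely on one orbit), and that the ``probability~1'' caveat is used consistently with how it is used in Theorem~\ref{pow2lemma}. A cleaner alternative would be to count orbits: $\mathcal{G}_D$ acts freely on $X_D$ (by the same coordinate-comparison argument applied to $gx=x$), the orbit of any $x\in X_D$ has size $|\mathcal{G}_D|$, and since $|X_D|=2^{n-K}$ and $\mathcal{G}_D$ acts transitively on $X_D$ (every embedding is reached from a fixed one by choosing, at each branching level, whether to apply the reflection), we conclude $|\mathcal{G}_D|=2^{n-K}$, matching $|C_2^{n-K}|$ and forcing $\varphi$ to be an isomorphism.
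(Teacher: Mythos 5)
Your proof is correct and follows the route the paper implicitly intends: the corollary is stated there without proof, resting on Lemma~\ref{lemmcomm} (commutativity) and the preceding remark that the $g_v$ are injective and idempotent (i.e.\ involutions), so your argument merely supplies the missing injectivity step, which you handle correctly via the minimal index $v^\star\in S$ (or, equivalently, via the free and transitive action on $X_D$ with $|X_D|=2^{n-K}$). One cosmetic point: $C_2^{n-K}$ is not a \emph{free abelian} group; it is the free object on $n-K$ generators in the category of elementary abelian $2$-groups, which is the universal property your homomorphism $\varphi$ actually requires.
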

For all $v>K$ let $\gamma_v=(1,\ldots,1,-1_v,\ldots,-1)$ be the vector
consisting of one's in the first $v-1$ components and $-1$ in the last
components. Then the $g_v$ actions are directly mapped onto the
chirality functions.
\begin{lem}
  For all $x\in X$, $\chi(g_v(x))=\chi(x)\odot \gamma_v$, where
  $\odot$ is the componentwise vector multiplication.
\end{lem}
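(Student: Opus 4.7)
The plan is to verify the identity componentwise. Reading off $\gamma_v$, one needs $\chi(g_v(x))_i = \chi(x)_i$ for $i < v$ and $\chi(g_v(x))_i = -\chi(x)_i$ for $i \geq v$. The range $i < v$ is essentially free: by \eqref{prefeq} the embedding $g_v(x)$ agrees with $x$ on components $1,\ldots,v-1$, and $\chi(y)_i$ depends only on $y_{i-K},\ldots,y_i$, so for such $i$ both sides coincide (and for $i \leq K$ both equal $1$ by definition). All the real content lies in the range $i \geq v$.

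For $i \geq v$ I would recast chirality in signed-determinant form, writing (up to a fixed per-index sign convention that cancels in the comparison between $\chi(x)_i$ and $\chi(g_v(x))_i$)
\[
 \chi(y)_i \;=\; \mathrm{sign}\,\det [\,y_{i-K+1}-y_{i-K},\;\ldots,\;y_i-y_{i-K}\,].
\]
Let $R$ denote the linear part of the affine reflection $R_x^v$ appearing in \eqref{prefeq}; then $\det R = -1$, and $R_x^v$ fixes pointwise the affine hyperplane $H$ through $x_{v-K},\ldots,x_{v-1}$, so $R$ restricts to the identity on the parallel linear hyperplane $H_0$. The core step is to show that every column of the matrix defining $\chi(g_v(x))_i$ equals $R$ applied to the corresponding column of the matrix for $\chi(x)_i$; this multiplies the determinant by $\det R = -1$ and yields the required sign flip. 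If $i - K \geq v$ this is immediate, since all points in the local window are reflected and $g_v(x)_j - g_v(x)_{i-K} = R_x^v(x_j) - R_x^v(x_{i-K}) = R(x_j - x_{i-K})$. If $v \leq i < v + K$, then $x_{i-K} \in H$ so $g_v(x)_{i-K} = x_{i-K}$; the columns indexed by $j < v$ then lie in $H_0$ and are fixed by $R$, while columns indexed by $j \geq v$ again satisfy $R_x^v(x_j) - x_{i-K} = R(x_j - x_{i-K})$ precisely because $x_{i-K} \in H$.

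The main obstacle is the mixed range $v \leq i < v + K$, where $g_v$ reflects only a suffix of the local window of $K+1$ consecutive embedded points. The geometric observation that unlocks this case is that the points left unreflected by $g_v$ within that window are necessarily among $x_{v-K},\ldots,x_{v-1}$ and therefore lie on the reflection hyperplane $H$; consequently their associated columns lie in the $+1$ eigenspace $H_0$ of $R$, so applying $R$ uniformly column-by-column is consistent with the partial reflection performed by $g_v$, and produces a single global sign change of the determinant, which is exactly the claimed componentwise action of $\gamma_v$ on $\chi(x)$.
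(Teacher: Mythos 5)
Your proof is correct and takes the same (indeed the only) route as the paper, which merely asserts that the lemma ``follows by definition of $g_v$ and of chirality''; you have supplied the verification the paper omits. In particular, you correctly isolate and resolve the one delicate point --- the mixed window $v\le i<v+K$ --- by noting that the unreflected points $x_{i-K},\ldots,x_{v-1}$ lie on the reflection hyperplane of $R_x^v$, so the partial reflection acts on the whole local window exactly as the full reflection and multiplies the defining determinant by $\det R=-1$.
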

\begin{proof}
This follows by definition of $g_v$ and of chirality of an embedding. 
\end{proof}

Because, by Alg.~\ref{alg:bp}, each $x\in X$ has a different
chirality, for all $x,x'\in X$ there is $g\in\mathcal{G}_D$ such that
$x'=g(x)$, i.e.~the action of $\mathcal{G}_D$ on $X$ is transitive. By
Thm.~\ref{pow2lemma}, the distances associated to the discretization
edges are invariant with respect to the discretization group.

\subsection{The pruning group}
Consider a pruning edge $\{u,v\}\in E_P$. By Thm.~\ref{pow2lemma},
with probability 1 we have $d_{uv}\in H^{uv}$, otherwise the instance
could not be a YES one. Also, again by Thm.~\ref{pow2lemma},
$d_{uv}=\|x_u-x_v\|\not=\|g_w(x)_u-g_w(x)_v\|$ for all
$w\in\{u+K,\ldots,v\}$ (note that distance $\|\nu_1-\nu_9\|$ in
Fig.~\ref{f:prunedist} is different from all its reflections
$\|\nu_1-\nu_h\|$ with $h\in\{10,11,13\}$ w.r.t.~$g_4,g_5$).  We
therefore define the {\it pruning group} $\mathcal{G}_P=\langle
g_w\;|\;w>K\land\forall \{u,v\}\in
E_P\;(w\not\in\{u+K,\ldots,v\})\rangle$. It is easy to show that
$\mathcal{G}_P\leq\mathcal{G}_D$. By definition, the distances
associated with the pruning edges are invariant with respect to
$\mathcal{G}_P$. 
\begin{thm}[Thm.~5.4 in \cite{powerof2-tr}]
The action of $\mathcal{G}_P$ on $X$ is transitive. \label{transthm}
\end{thm}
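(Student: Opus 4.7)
The plan is to lift the transitive action of $\mathcal{G}_D$ on $X$ (established in the paragraph preceding the theorem) to a transitive action of the subgroup $\mathcal{G}_P$, by arguing that whenever an element of $\mathcal{G}_D$ carries one feasible embedding to another, it is already forced to lie in $\mathcal{G}_P$.

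First I would fix $x,x'\in X$ and use the fact that $\mathcal{G}_D\cong C_2^{n-K}$ acts freely and transitively on $X_D\supseteq X$ (orbit size $|X_D|=2^{n-K}$ matches the group order) to obtain a unique $g\in\mathcal{G}_D$ with $x'=g(x)$. Thanks to the Abelian structure and the fact that each $g_w$ is an involution, I can write $g$ uniquely as $g=\prod_{w\in S}g_w$ for some $S\subseteq\{K+1,\ldots,n\}$. The theorem then reduces to the claim that $S\cap\{u+K,\ldots,v\}=\emptyset$ for every pruning edge $\{u,v\}\in E_P$, which is precisely the condition for $g\in\mathcal{G}_P$.

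For the crucial step I would fix such an edge and note that, because $x,x'\in X$, the constraint $\|x_u-x_v\|=d_{uv}=\|g(x)_u-g(x)_v\|$ places $g$ inside the stabilizer of the distance $\|x_u-x_v\|$ inside $\mathcal{G}_D$. By Theorem~\ref{pow2lemma} this distance takes $2^{v-u-K}$ values on the $\mathcal{G}_D$-orbit of $x$, so the stabilizer has order $2^{n-v+u}$. A direct geometric check identifies which generators preserve the distance: $g_w$ does so exactly when $w\le u$ (both endpoints undergo the same rigid reflection), or $w>v$ (neither endpoint moves), or $u<w\le u+K$ ($x_u$ lies on the reflecting hyperplane, so it is fixed while $x_v$ reflects isometrically about a point fixing $x_u$). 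These ``safe'' generators account for exactly $2^{n-v+u}$ elements via the Abelian $C_2^{n-K}$ structure, so they generate the whole stabilizer. Uniqueness of the expression $g=\prod_{w\in S}g_w$ then forces $S$ to miss the complementary window $\{u+K,\ldots,v\}$, and iterating over all pruning edges yields $g\in\mathcal{G}_P$.

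The main obstacle will be the middle step: a priori a product of several $g_w$ with indices lying in $\{u+K,\ldots,v\}$ might conspire to leave the pruning distance invariant, and ruling this out cleanly requires the full injectivity clause of Theorem~\ref{pow2lemma} (distinct chirality patterns on the relevant window produce distinct elements of $H^{uv}$) together with the freeness of the $C_2^{n-K}$ action on $X_D$, so that the ``safe'' generators generate the entire stabilizer rather than a proper subgroup.
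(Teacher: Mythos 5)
The paper itself does not prove Theorem~\ref{transthm}: it cites Thm.~5.4 of \cite{powerof2-tr} without reproducing the argument, so there is no in-paper proof to compare against. Your overall strategy --- use the free, transitive action of $\mathcal{G}_D$ on $X_D\supseteq X$ to obtain a unique transporter $g=\prod_{w\in S}g_w$ between $x,x'\in X$, then show that preservation of every pruning distance forces $S$ to avoid the forbidden windows --- is the natural one and is surely the intended route. However, as written the argument has a concrete off-by-one gap at its final step.

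Your own case analysis places $g_{u+K}$ in the stabilizer of the pruning distance $\|x_u-x_v\|$: the hyperplane defining $R_x^{u+K}$ passes through $x_u,\ldots,x_{u+K-1}$, so it fixes $x_u$ while moving $x_v$ isometrically, and Theorem~\ref{pow2lemma} asserts explicitly that $\|x_v-x_u\|=\|R_x^{u+K}(x_v)-x_u\|$. Hence your ``safe'' set is $\{w\le u+K\}\cup\{w>v\}$, whose complement in $\{K+1,\ldots,n\}$ is $\{u+K+1,\ldots,v\}$, not $\{u+K,\ldots,v\}$. The argument therefore yields only $S\cap\{u+K+1,\ldots,v\}=\emptyset$, while membership in $\mathcal{G}_P$ as defined in the paper requires $S\cap\{u+K,\ldots,v\}=\emptyset$; your last sentence claims the stronger conclusion without justification, and it cannot be reached this way: for an instance with a single pruning edge $\{u,v\}$, $g_{u+K}(x)$ satisfies that edge and so lies in $X$, yet $g_{u+K}\notin\mathcal{G}_P$, so by freeness no element of $\mathcal{G}_P$ carries $x$ to it. (This tension is already in the paper: the sentence preceding the definition of $\mathcal{G}_P$ claims the distance changes for all $w\in\{u+K,\ldots,v\}$, contradicting Theorem~\ref{pow2lemma} at $w=u+K$, while the accompanying figure and its parenthetical note invoke only $g_4,g_5$, i.e.\ the window $\{u+K+1,\ldots,v\}$.) You must either prove the statement for the group generated by $\{g_w\;:\;w\notin\{u+K+1,\ldots,v\}\ \forall\{u,v\}\in E_P\}$ and flag the discrepancy, or explain why $u+K$ can be excluded from $S$ --- which generically it cannot. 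Separately, your orbit--stabilizer count of the stabilizer's order presupposes that all fibres of the distance map on the $\mathcal{G}_D$-orbit have equal size $2^{n-v+u}$; as you yourself note, this is exactly what the injectivity clause of Theorem~\ref{pow2lemma} must supply (each fibre is a union of orbits of the safe subgroup, and there are exactly $2^{v-u-K}$ distinct distance values), so that step needs to be argued rather than asserted.
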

$|X|$ was shown to be a power of two with probability 1 in the
unpublished technical report \cite{powerof2-tr}. We provide an shorter
and clearer proof.
\begin{thm}
With probability 1, $\exists\ell\in\mathbb{N}\;|X|=2^\ell$. \label{pow2thm}
\end{thm}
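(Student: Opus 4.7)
The plan is to combine the structural information we already have about $\mathcal{G}_D$ with the transitivity of the $\mathcal{G}_P$-action on $X$, and close out via the orbit-stabilizer theorem.

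First I would observe that by the corollary following Lemma~\ref{lemmcomm}, with probability 1 the discretization group $\mathcal{G}_D$ is isomorphic to the elementary abelian 2-group $C_2^{n-K}$. Consequently every subgroup of $\mathcal{G}_D$ is itself an elementary abelian 2-group, and in particular has order equal to a power of 2. Since the pruning group satisfies $\mathcal{G}_P \leq \mathcal{G}_D$, this immediately gives $|\mathcal{G}_P| = 2^a$ for some $a \in \mathbb{N}$.

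Next I would invoke Theorem~\ref{transthm}, which tells us that $\mathcal{G}_P$ acts transitively on $X$. Fix any $x \in X$ and let $\mathrm{Stab}(x) = \{g \in \mathcal{G}_P \mid g(x) = x\}$ be its stabilizer. By orbit-stabilizer, transitivity yields
\begin{equation*}
  |X| \;=\; [\,\mathcal{G}_P : \mathrm{Stab}(x)\,] \;=\; \frac{|\mathcal{G}_P|}{|\mathrm{Stab}(x)|}.
\end{equation*}
Since $\mathrm{Stab}(x)$ is a subgroup of $\mathcal{G}_P$, which is itself a subgroup of the elementary abelian 2-group $\mathcal{G}_D$, we obtain $|\mathrm{Stab}(x)| = 2^b$ for some $0 \leq b \leq a$. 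Therefore $|X| = 2^{a-b}$, as required.

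I do not expect any serious obstacle here: the substantive work has already been done in establishing that $\mathcal{G}_D \cong C_2^{n-K}$ (hence every subgroup is a power-of-2 elementary abelian group) and that $\mathcal{G}_P$ acts transitively on $X$. The only small point to be careful about is that orbit-stabilizer is being applied to a finite group acting on a finite set, which is the case since $X$ is a subset of the leaves of the BP binary search tree of height $n$.
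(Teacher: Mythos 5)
Your proposal is correct and follows essentially the same route as the paper: both arguments rest on $\mathcal{G}_P\leq\mathcal{G}_D\cong C_2^{n-K}$ (so $|\mathcal{G}_P|$ is a power of two) together with the transitivity of the $\mathcal{G}_P$-action on $X$ from Thm.~\ref{transthm}. The only divergence is the final counting step: the paper shows the action is actually free (if $gx=g'x$ for $g,g'\in\mathcal{G}_P$ then $g=g'$), giving $|X|=|\mathcal{G}_P|$ exactly, whereas your orbit--stabilizer argument sidesteps freeness and concludes only that $|X|$ divides $|\mathcal{G}_P|$ --- which is all the theorem requires.
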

\begin{proof}
  Since $\mathcal{G}_D\cong C_2^{n-K}$, $|\mathcal{G}_D|=2^{n-K}$.
  Since $\mathcal{G}_P\leq\mathcal{G}_D$, $|\mathcal{G}_P|$ divides
  the order of $|\mathcal{G}_D|$, which implies that there is an
  integer $\ell$ with $|\mathcal{G}_P|=2^\ell$. By
  Thm.~\ref{transthm}, the action of $\mathcal{G}_P$ on $X$ only has
  one orbit, i.e.~$\mathcal{G}_Px=X$ for any $x\in X$. By idempotency,
  for $g,g'\in\mathcal{G}_P$, if $gx=g'x$ then $g=g'$. This implies
  $|\mathcal{G}_Px|=|\mathcal{G}_P|$. Thus, for any $x\in X$,
  $|X|=|\mathcal{G}_Px|=|\mathcal{G}_P|=2^\ell$. 
\end{proof}

\subsection{The number of nodes in function of pruning edges}
Fig.~\ref{f:numsol} shows a Directed Acyclic Graph (DAG)
$\mathcal{D}_{uv}$ that we use to compute the number of valid nodes in
function of pruning edges between two vertices $u,v\in V$ such
that $v>K$ and $u<v-K$. 
\begin{figure}[!ht]
\hspace*{-1cm}\begin{minipage}{14cm}
\begin{center}
\psfrag{2}{1}
\psfrag{4}{2}
\psfrag{8}{4}
\psfrag{16}{8}
\psfrag{32}{16}
\psfrag{64}{32}
\psfrag{v}{$v$}
\psfrag{K+1}{\scriptsize $u\!\!+\!\!K\!\!-\!\!1$}
\psfrag{K+2}{\scriptsize $u\!\!+\!\!K$}
\psfrag{K+3}{\scriptsize $u\!\!+\!\!K\!\!+\!\!1$}
\psfrag{K+4}{\scriptsize $u\!\!+\!\!K\!\!+\!\!2$}
\psfrag{K+5}{\scriptsize $u\!\!+\!\!K\!\!+\!\!3$}
\psfrag{K+6}{\scriptsize $u\!\!+\!\!K\!\!+\!\!4$}
\psfrag{e1}{\scriptsize $0$}
\psfrag{e2}{\scriptsize $1$}
\psfrag{e3}{\scriptsize $2$}
\psfrag{e4}{\scriptsize $3$}
\psfrag{e5}{\scriptsize $4$}
\psfrag{e12}{\scriptsize $0\vee 1$}
\psfrag{e23}{\scriptsize $1\vee 2$}
\psfrag{e34}{\scriptsize $2\vee 3$}
\psfrag{e45}{\scriptsize $3\vee 4$}
\psfrag{e123}{\scriptsize $0\vee 1\vee 2$}
\psfrag{e234}{\scriptsize $1\vee 2\vee 3$}
\psfrag{e345}{\scriptsize $2\!\vee\!3\!\vee\!4$}
\psfrag{e1234}{\scriptsize $0\vee\ldots\vee 3$}
\psfrag{e2345}{\scriptsize $1\vee\ldots\vee 4$}
\psfrag{e12345}{\scriptsize $0\vee\ldots\vee 4$}
\includegraphics[width=14cm]{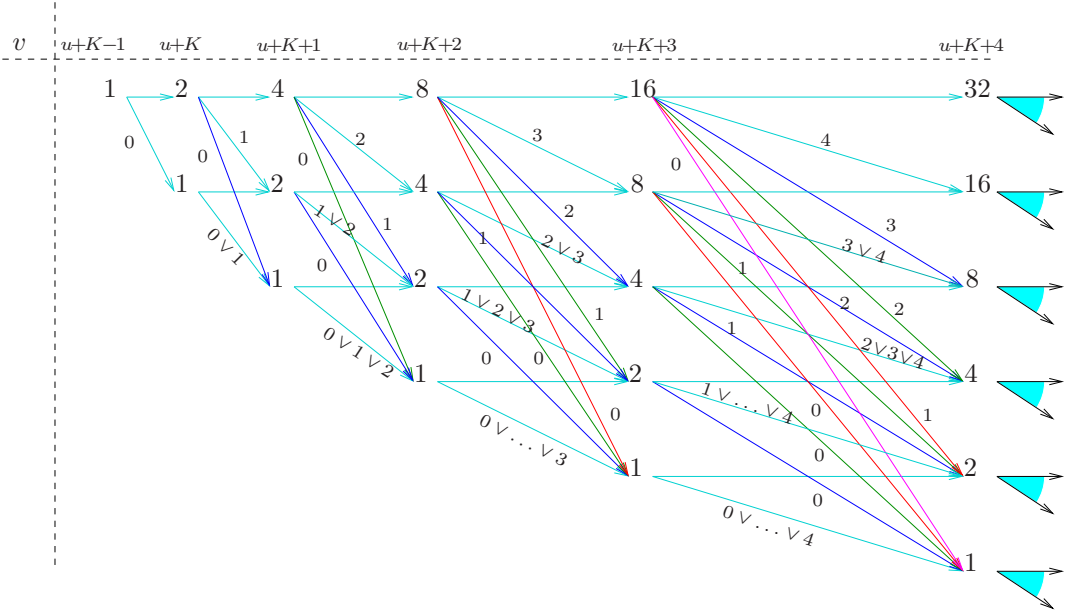}
\end{center}
\end{minipage}
\caption{Number of valid BP nodes (vertex label) at level $u+K+\ell$
  (column) in function of the pruning edges (path spanning all
  columns).}
\label{f:numsol}
\end{figure}
The first line shows different values for the rank of $v$ w.r.t.~$u$;
an arc labelled with an integer $i$ implies the existence of a pruning
edge $\{u+i,v\}$ (arcs with $\vee$-expressions replace parallel arcs
with different labels). An arc is unlabelled if there is no pruning
edge $\{w,v\}$ for any $w\in\{u,\ldots,v-K-1\}$. The vertices of the
DAG are arranged vertically by BP search tree level, and are labelled
with the number of BP nodes at a given level, which is always a power
of two by Thm.~\ref{pow2thm}. A path in this DAG represents the set of
pruning edges between $u$ and $v$, and its incident vertices show the
number of valid nodes at the corresponding levels. For example,
following unlabelled arcs corresponds to no pruning edge between $u$
and $v$ and leads to a full binary BP search tree with $2^{v-K}$ nodes
at level $v$.

\subsection{Polynomial DMDGP cases}
For a given $G_D$, each possible pruning edge set $E_P$ corresponds to
a path spanning all columns in $\mathcal{D}_{1n}$. Instances with
diagonal (Prop.~\ref{bwprop}) or below-diagonal (Prop.~\ref{bwprop2})
$E_P$ paths yield BP trees with constant width.
\begin{prop}
\label{bwprop}
If $\exists v_0>K$ s.t.~$\forall v>v_0$ $\exists!u<v-K$ with
$\{u,v\}\in E_P$ then the BP search tree width is bounded by
$2^{v_0-K}$.
\end{prop}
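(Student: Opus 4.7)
The plan is to prove the width bound by induction on the level $v \ge v_0$, using the principle that each additional pruning edge incident to a new vertex $v > v_0$ exactly cancels the doubling caused by BP branching at level $v$. Hence, if the width at level $v_0$ is at most $2^{v_0-K}$, then it remains so thereafter.

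The base case is immediate: the full binary BP tree rooted at level $K+1$ has at most $2^{v_0-K}$ nodes at level $v_0$, and any pruning that may occur at or before level $v_0$ can only reduce this count. For the inductive step, fix $v>v_0$ and a node at level $v-1$ corresponding to a partial embedding $x$. The BP algorithm produces at most two candidate positions for $x_v$: call them $p$ and $p' = R_x^{v}(p)$, where $R_x^v$ is the reflection through the hyperplane spanned by $x_{v-K},\ldots,x_{v-1}$. By the hypothesis, there is a unique $u<v-K$ with $\{u,v\}\in E_P$, and feasibility at this level requires $\|x_v-x_u\|=d_{uv}$. I would then appeal to Theorem~\ref{pow2lemma}: since $u<v-K$ forces $u+K\neq v$, the reflection $R_x^v$ is not the distance-preserving reflection $R_x^{u+K}$, so with probability $1$ the candidates $p$ and $p'$ give distinct distances to $x_u$. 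Exactly one therefore satisfies the pruning-edge constraint, while the other is discarded. Consequently, the width at level $v$ equals the width at level $v-1$, which by the inductive hypothesis is at most $2^{v_0-K}$.

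The main obstacle is the genericity argument needed to conclude $\|p-x_u\|\neq\|p'-x_u\|$: one must verify that $R_x^v(p)$ does not coincide with $R_x^{u+K}(p)$, i.e.~that the two reflection hyperplanes at levels $u+K$ and $v$ lead to different images of $x_v$. This is generically true and is exactly what is captured by the ``with probability~$1$'' qualifier in Theorem~\ref{pow2lemma}. Pictorially, the proposition describes the diagonal path in the DAG $\mathcal{D}_{1n}$ of Fig.~\ref{f:numsol}, along which the node-count label stays constant once the branching has been perfectly balanced by pruning; the induction above is simply the formal counterpart of this visual observation.
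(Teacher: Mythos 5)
Your argument is correct and is essentially the paper's own proof made explicit: the paper merely points to the diagonal path $\mbox{\sf p}_0$ in the DAG of Fig.~\ref{f:numsol}, whose node labels encode precisely the level-by-level induction you carry out, and the justification for those labels is the same appeal to Theorem~\ref{pow2lemma} (the sibling $p'=R_x^v(p)$ is neither $p$ nor $R_x^{u+K}(p)$ since $u+K<v$, so its distance to $x_u$ differs with probability 1). One small precision: a node at level $v-1$ may have \emph{no} surviving child, so you should only claim that \emph{at most} one of $p,p'$ satisfies the pruning constraint --- the width is therefore non-increasing past $v_0$ rather than constant, which is all the stated upper bound requires.
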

\begin{proof}
This corresponds to a path $\mbox{\sf
  p}_0=(1,2,\ldots,2^{v_0-K},\ldots,2^{v_0-K})$ that follows
unlabelled arcs up to level $v_0$ and then arcs labelled $v_0-K-1$,
$v_0-K-1\vee v_0-K$, and so on, leading to nodes that are all labelled
with $2^{v_0-K}$ (Fig.~\ref{f:p0}, left). 
\end{proof}
\begin{prop}
  \label{bwprop2}
  If $\exists v_0>K$ such that every subsequence $s$ of consecutive
  vertices $>\!\!v_0$ with no incident pruning edge is preceded by a
  vertex $v_s$ such that $\exists u_s<v_s\;(v_s-u_s\ge|s|\land
  \{u_s,v_s\}\in E_P)$, then the BP search tree width is bounded by
  $2^{v_0-K}$.
\end{prop}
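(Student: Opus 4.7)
The plan is to prove by induction on the level $v \ge v_0$ that the number $W(v)$ of valid BP nodes at level $v$ is bounded by $2^{v_0-K}$, and then to conclude by taking the maximum over $v$. The base case $v = v_0$ is the trivial full-binary-tree bound $W(v_0) \le 2^{v_0-K}$.

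For the inductive step at $v > v_0$, I would split according to whether $v$ has an incident pruning edge. If there exists $u < v - K$ with $\{u,v\} \in E_P$, then by Theorem \ref{pow2lemma} the corresponding distance constraint rules out at least one of the two candidate positions allowed by the discretization step at $v$, so $W(v) \le W(v-1) \le 2^{v_0 - K}$ by the inductive hypothesis. Otherwise $v$ lies inside a maximal gap $s$ of consecutive pruning-edge-free vertices; by hypothesis $s$ is preceded by a vertex $v_s$ carrying an incident pruning edge $\{u_s, v_s\}$ with $v_s - u_s \ge |s|$. Along the gap the width merely doubles at every step, so for $v = v_s + i$ with $1 \le i \le |s|$ we have $W(v) = 2^{i} \, W(v_s)$, and it suffices to prove the stronger bound $W(v_s) \le 2^{v_0 - K - |s|}$.

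The crux is therefore this sharper bound on $W(v_s)$. Lifting the analysis to the pruning group $\mathcal{G}_P$ of Section \ref{s:bwidth}, the edge $\{u_s, v_s\}$ annihilates the block of generators $g_w$ for $w \in \{u_s + K, \ldots, v_s\}$; the hypothesis $v_s - u_s \ge |s|$ ensures that the span of this block is long enough to offset the $|s|$ new generators $g_{v_s+1},\ldots,g_{v_s+|s|}$ that will be introduced while traversing the subsequent gap. Equivalently, in the DAG $\mathcal{D}_{1n}$ of Figure \ref{f:numsol}, the argument exhibits a path whose vertex labels never exceed $2^{v_0-K}$: every ``uphill'' run of $|s|$ unlabelled arcs through a gap is immediately preceded by a labelled arc whose drop is at least $|s|$, coming from the long-reaching pruning edge at $v_s$.

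The main obstacle is the delicate bookkeeping when several gaps and their preceding pruning edges interleave, or when the block $\{u_s + K, \ldots, v_s\}$ overlaps with generators already killed by an earlier pruning edge (so that the ``credit'' collected at $v_s$ is partly spent). The cleanest way around this is to amortize gap-by-gap rather than level-by-level, using the span of each pruning edge as a budget for the gap immediately following it, and to appeal to the inductive hypothesis on shorter prefixes of the ordering to handle the interleaving cleanly, exactly as the diagonal path in the DAG was used in Proposition \ref{bwprop}.
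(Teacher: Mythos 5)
Your reduction of the claim to the sharper bound $W(v_s)\le 2^{v_0-K-|s|}$ at the vertex preceding each gap correctly isolates the crux (and is already more explicit than the paper, whose ``proof'' is a one-line pointer to the right-hand picture in Fig.~\ref{f:p0}); the base case, the step $W(v)\le W(v-1)$ when $v$ carries a pruning edge, and the doubling $W(v_s+i)\le 2^iW(v_s)$ across a gap are all fine. But the generator count you offer for the crux does not close. A pruning edge $\{u_s,v_s\}$ removes from $\mathcal{G}_P$ only the block $g_w$, $w\in\{u_s+K,\ldots,v_s\}$, which has $v_s-u_s-K+1$ elements (in fact, by Thm.~\ref{pow2lemma} the reflection $R^{u_s+K}_x$ \emph{preserves} $\|x_{v_s}-x_{u_s}\|$, so only $v_s-u_s-K$ generators are genuinely killed). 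Hence the hypothesis $v_s-u_s\ge|s|$ buys at most $|s|-K+1$ exclusions, not the $|s|$ needed to offset the gap, and overlaps with blocks killed by earlier pruning edges erode even that. The amortization you invoke at the end is named but never carried out, and it cannot be carried out from the stated hypothesis.

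Indeed the hypothesis is too weak for the stated conclusion, so no repair of your argument (or of the paper's sketch) is possible without changing the proposition. Take $K=2$, $v_0=3$, with $\{1,4\}\in E_P$ and no pruning edge incident to vertex $5$. The only gap is $s=(5)$, it is preceded by $v_s=4$, and $u_s=1$ gives $v_s-u_s=3\ge 1=|s|$ with $\{1,4\}\in E_P$, so the hypothesis of Prop.~\ref{bwprop2} holds with $v_0=3$. Yet, exactly as in Fig.~\ref{f:prunedist}, the edge $\{1,4\}$ leaves $2$ valid nodes at level $4$, and level $5$ then has $4>2=2^{v_0-K}$ nodes. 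A longer instance ($K=2$, $v_0=5$, $E_P=\{\{3,6\},\{4,7\}\}$, gap $(8,9,10)$ with $7-4=3\ge 3$) shows the violation is unbounded: only $g_5,g_6,g_7$ are killed, so generically $W(10)=2^5$ against the claimed $2^3$. What a ``below-diagonal path'' in Fig.~\ref{f:numsol} actually requires is the cumulative condition that for every $v>v_0$ at least $v-v_0$ of the generators $g_{K+1},\ldots,g_v$ are killed by pruning edges $\{u,v'\}$ with $v'\le v$; the local, per-gap condition $v_s-u_s\ge|s|$ does not imply it. You should either prove the proposition under that cumulative hypothesis (where your level-by-level induction goes through verbatim, with $W(v)\le 2\,W(u+K-1)$ as the pruning step) or flag the statement itself as needing correction.
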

\begin{proof}
  (Sketch) This situation corresponds to a below-diagonal path,
  Fig.~\ref{f:p0} (right). 
\end{proof}
\begin{figure}[!ht]
\hspace*{-1cm}\begin{minipage}{14.2cm}
\begin{center}
\psfrag{2}{\tiny 1}
\psfrag{4}{\tiny 2}
\psfrag{8}{\tiny 4}
\psfrag{16}{\tiny 8}
\psfrag{32}{\tiny 16}
\psfrag{64}{\tiny 32}
\psfrag{v}{}
\psfrag{K+1}{}
\psfrag{K+2}{}
\psfrag{K+3}{}
\psfrag{K+4}{}
\psfrag{K+5}{}
\psfrag{K+6}{}
\psfrag{e1}{}
\psfrag{e2}{}
\psfrag{e3}{}
\psfrag{e4}{}
\psfrag{e5}{}
\psfrag{e12}{}
\psfrag{e23}{}
\psfrag{e34}{}
\psfrag{e45}{}
\psfrag{e123}{}
\psfrag{e234}{}
\psfrag{e345}{}
\psfrag{e1234}{}
\psfrag{e2345}{}
\psfrag{e12345}{}
\includegraphics[width=7cm]{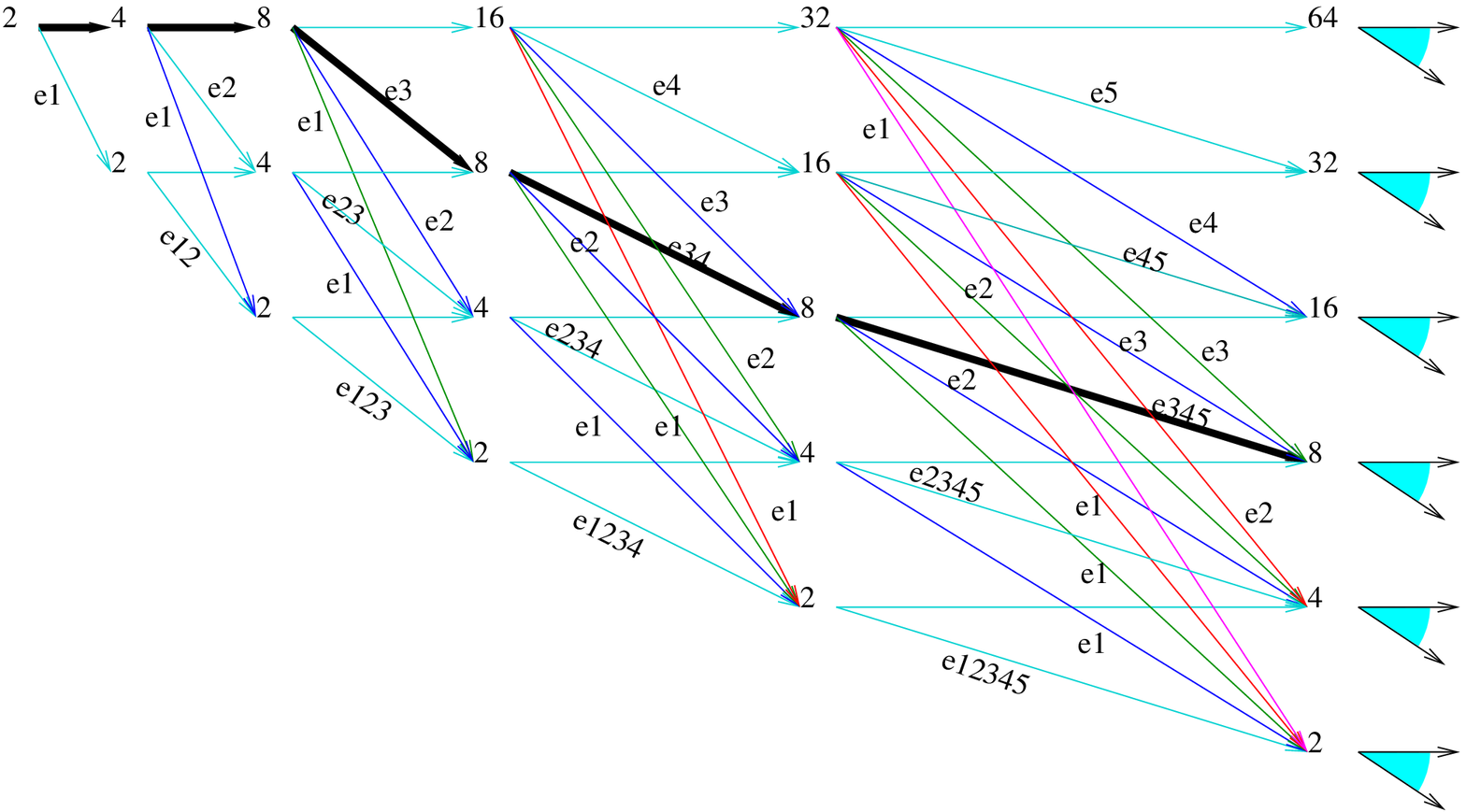}
\includegraphics[width=7cm]{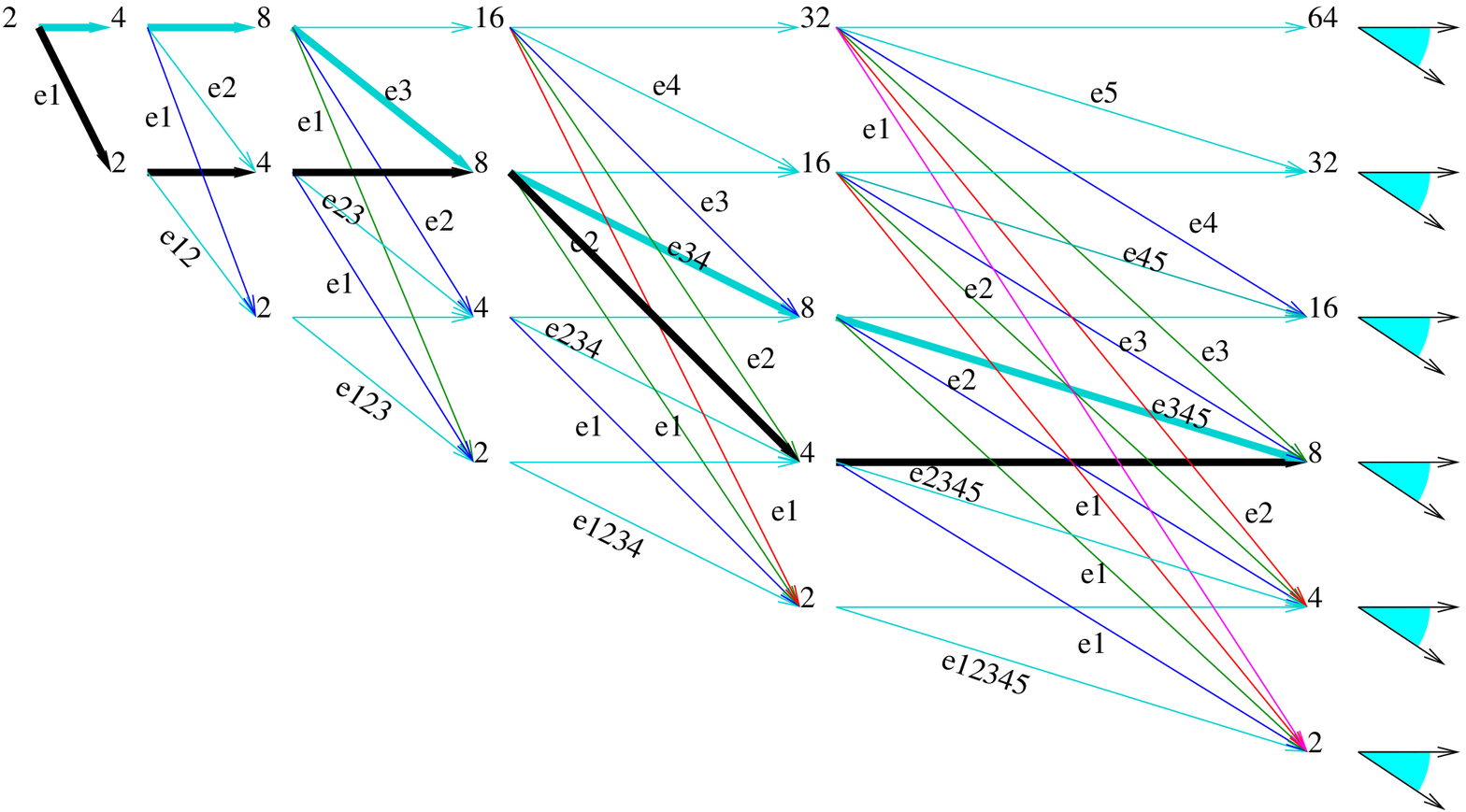}
\end{center}
\end{minipage}
\caption{A path $\mbox{\sf p}_0$ with treewidth $8$ (left) and another
  path below $\mbox{\sf p}_0$ (right).}
\label{f:p0}
\end{figure}
In general, for those instances for which the BP search tree width has
a $O(\log n)$ bound, the BP has a polynomial worst-case running time
$O(L 2^{\log n})=O(Ln)$, where $L$ is the complexity of computing $P$.
Since $L$ is typically constant in $n$ \cite{dong03}, for such cases
the BP runs in linear time $O(n)$.

Let $V'=\{v\in V\;|\;\exists \ell\in\mathbb{N}\;(v=2^\ell)\}$. 
\begin{prop}
  If $\exists v_0>K$ s.t.~for all $v\in V\smallsetminus V'$ with
  $v>v_0$ there is $u<v-K$ with $\{u,v\}\in E_P$ then the BP search
  tree width at level $n$ is bounded by $2^{v_0}n$. \label{bwprop3}
\end{prop}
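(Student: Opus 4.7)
The plan is to count valid BP nodes level by level along the DAG $\mathcal{D}_{1n}$ of Section~\ref{s:bwidth}, exploiting the fact that at each level the count either doubles (when no suitable pruning edge is incident) or stays put (when one is), and then to show that under the hypothesis the doubling can only happen at the powers of two in $(v_0,n]$.

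First, I would bound the width at level $v_0$: in the worst case no pruning edge is incident at levels $K+1,\ldots,v_0$, giving a full binary subtree with $2^{v_0-K}$ nodes at level $v_0$. Next, for every $v\in V\smallsetminus V'$ with $v_0<v\le n$, the hypothesis yields a pruning edge $\{u,v\}\in E_P$ with $u<v-K$; by Theorem~\ref{pow2lemma}, with probability 1 the two candidate reflections of $v$ lie at distinct distances from $x_u$, so the constraint $d_{uv}$ keeps exactly one of them in the set $P$ of Alg.~\ref{alg:bp}, and the count at level $v$ equals the count at level $v-1$. On the remaining levels $v\in V'\cap(v_0,n]$, the count may at worst double. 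Since the number of powers of two in $(v_0,n]$ is at most $\lfloor\log_2 n\rfloor$, the cumulative multiplier from level $v_0$ to level $n$ is at most $2^{\lfloor\log_2 n\rfloor}\le n$, giving width at most $2^{v_0-K}\cdot n\le 2^{v_0}\,n$.

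The main obstacle is the step from ``a pruning edge $\{u,v\}$ with $u<v-K$ exists'' to ``the BP count strictly fails to double at level $v$''; this is the geometric content of Theorem~\ref{pow2lemma} together with the observation used in defining $\mathcal{G}_P$, namely that the distance $d_{uv}$ discriminates between the two reflections with probability 1. Once this structural fact is cited, the rest is the same kind of DAG-path bookkeeping used in Propositions~\ref{bwprop} and~\ref{bwprop2}, adapted so that the ``doubling arcs'' in $\mathcal{D}_{1n}$ are restricted to the (at most $\log_2 n$) levels indexed by $V'$.
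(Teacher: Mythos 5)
Your argument is correct and is essentially the paper's own proof, which is only a one-line sketch describing the same object: a path in the DAG $\mathcal{D}_{1n}$ that stays on the diagonal at height $2^{v_0}$ except at the at most $\log_2 n$ levels in $V'$, where the node count may double, yielding $2^{v_0}\cdot 2^{\log_2 n}=2^{v_0}n$. You merely make explicit the justification (via Thm.~\ref{pow2lemma}) that a pruning edge $\{u,v\}$ with $u<v-K$ lets at most one of the two candidate positions survive — a detail the paper leaves implicit, and where your phrase ``exactly one'' should strictly be ``at most one,'' which is all the upper bound needs.
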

\begin{proof}
  This corresponds to a path along the diagonal $2^{v_0}$ apart from
  logarithmically many vertices in $V$ (those in $V'$), at which
  levels the BP doubles the number of search nodes
  (Fig.~\ref{f:plog}).
\end{proof}
\begin{figure}[!ht]
\vspace*{-0.7cm}
\begin{center}
\includegraphics[width=7cm]{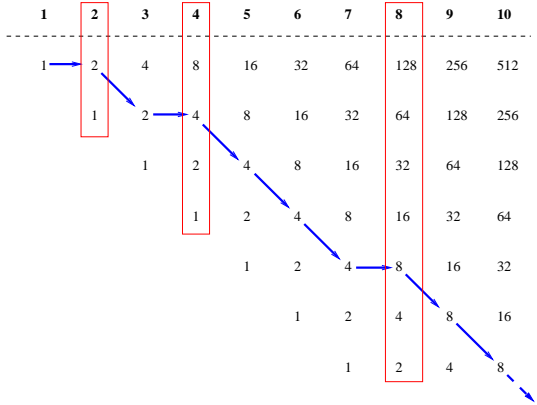}
\vspace*{-0.5cm}
\end{center}
\caption{A path with treewidth $O(n)$.}
\label{f:plog}
\end{figure}
For a pruning edge set $E_P$ as in Prop.~\ref{bwprop3}, or yielding a
path below it, the BP runs in quadratic time $O(n(n+1)/2)=O(n^2)$.

\subsection{Empirical verification}
On a set of sixteen protein instances from the Protein Data Bank
(PDB), twelve satisfy Prop.~\ref{bwprop}, and four
Prop.~\ref{bwprop2}, all with $v_0=4$. This is consistent with the
computational insight \cite{dmdgp} that BP has polynomial complexity
on real proteins.

\section{Conclusion}
We exploit some geometrical properties of an {\bf NP}-hard distance
geometry problem with a specific vertex order to derive some
polynomial cases. Empirically, proteins backbones seem to fall in
these cases; this provides an explanation for the practical efficiency
of a well-known embedding algorithm called Branch-and-Prune.

\bibliographystyle{plain}
\bibliography{pcc}

\end{document}